\newtheorem{conj}{Conjecture}
\newtheorem{prop}{Proposition}
\newtheorem{definition}{Definition}
\newcommand\tensor[1]{\underline{\mathbf{#1}}}
\DeclareMathAlphabet\tn{OMS}{cmsy}{b}{n}
\title{AUTHOR GUIDELINES FOR ICASSP 2018 PROCEEDINGS MANUSCRIPTS}
\name{Giuseppe G. Calvi, Ilia Kisil,  Danilo P. Mandic}
\address{Electrical and Electronic Engineering Department, Imperial College London, SW7 2AZ, UK\\ E-mails: \{giuseppe.calvi15, i.kisil15, d.mandic\}@imperial.ac.uk}
\begin{document}
	\onecolumn
	%
	%
	\title{The Sum of Tensor networks}
	%
	%
	%

	%
	%

	\markboth{Journal of \LaTeX\ Class Files,~Vol.~6, No.~1, January~2007}%
	{Shell \MakeLowercase{\textit{et al.}}: Bare Demo of IEEEtran.cls for Journals}
	%



	\maketitle

	\begin{abstract}
		Tensor networks (TNs) have been gaining interest as multiway data analysis tools owing to their ability to tackle the curse of dimensionality and to represent tensors as smaller-scale interconnections of their intrinsic features. However, despite the obvious advantages, the current treatment of TNs as stand-alone entities does not take full benefit of their underlying structure and the associated feature localization. To this end, embarking upon the analogy with a feature fusion, we propose a rigorous framework for the combination of TNs, focusing on their summation as the natural way for their combination. This allows for feature combination for any number of tensors, as long as their TN representation topologies are isomorphic. The benefits of the proposed framework are demonstrated on the classification of several groups of partially related images, where it outperforms standard machine learning algorithms.
	\end{abstract}

	\begin{keywords}
		Sum of tensor networks, Tucker decomposition, classification, feature extraction, graphs
	\end{keywords}

	%

	\section{Introduction}

	Tensors are multidimensional generalizations of matrices and vectors, and their ability to make enhanced use of data structures to perform dimensionality reduction and component extraction offers a powerful tool in the analysis of Big Data. Owing to their flexibility and a scalable way to deal with multi-way data, tensors have found application in a wide range of disciplines, from the most theoretical ones, such as mathematics and numerical analysis \cite{comon,lathauwer_hooi,lathauwer_hosvd}, to the more practical signal processing applications \cite{cichocki_brain,tenssignalprocessing}.

	Applying standard numerical methods to tensors may be difficult, as in the raw format the required storage memory and number of operations grow exponentially with the tensor order (\textit{curse of dimensionality}) \cite{ttoseledets}. To overcome this issue, tensor decompositions (TDs) have been introduced with the aim to represent tensors by a much smaller number of parameters, via multilinear operations over the latent factors. The most well-known TD approaches are the Canonical Polyadic \cite{parafac, parafac2}, the Tucker \cite{tucker1, tucker2}, and the Tensor Train decompositions \cite{ttoseledets} (CPD, TKD, and TT respectively). 

	Any TD can be considered as a special case of the more general concept of tensor networks (TNs), which represent a high order tensor as a set of sparsely interconnected core tensors and factor matrices \cite{tenssignalprocessing}. In other words, TNs can be viewed as multi-core interconnections of features of the original tensor. The advantages of representing a tensor as a TN are: (i) TNs are perfectly suited to deal with the curse of dimensionality, as a high order tensor can be stored on different machines which deal with only the individual cores, (ii) each core may be representative of specific characteristics of the underlying tensor, thus implying inherent feature extraction on the original data. Despite these advantages, open problems in practical design of TNs include: (i) the choice of the TD for a particular application, (ii) minimization of the number of the parameters necessary for the TN representation \cite{ttoseledets}, and (iii) to the best of our knowledge, a rigorous framework to combine TNs. 
	
	In this work, we address the point (iii), by focusing on the issue of TN summation for TNs of the same topology. Summation is the most natural way to combine any two entities into a new one, and we postulate that a sum of multiple tensors (summands), \textit{in their TN format}, preserves the underlying structure in the inherent features of the summands. In this way, the sum of TNs immediately yields another isomorphic TN, the cores of which are a combination of the corresponding cores of the summand TNs. The summation operator can hence be interpreted as a process of mixing features of the original tensors, however, algorithms for tensor network summation are still in their infancy. 
	
	To this end, we propose a novel framework for the summation of TNs (and inherently any two or more tensors represented by TNs). This is achieved by simple block arrangements of the corresponding original cores. Leveraging on the very efficient way in which TNs represent large tensors in terms of the required storage and computation, and realizing that interconnections among the cores in a TN describe how data structures of the original tensors are intertwined, we explore the possibility to combine \textit{corresponding individual cores} of two TNs with the same topology in order to obtain a \textit{new TN}, the cores of which carry both joint and individual information present in the original cores. Therefore, this is related to the recently introduced common feature extraction in  \cite{cobe}, however, unlike matrices, the proposed framework enables this to be performed on tensors of any order, with the only condition that the original tensors are represented as TNs with equivalent topologies. Practical advantages of the proposed framework are demonstrated through an image classification application based on the ETH-80 dataset \cite{eth}, whereby every dataset entry is represented as a TN, and their individual features are combined via our proposed framework. This allows for the extraction of the shared information in the original data, which is then fed to a Support Vector Machine algorithm (SVM), and attains an overall classification rate of $92.3\%$. The proposed framework therefore opens up new perspectives on the manipulation of TNs, completely removing the preconception that they have to be treated as stand-alone entities, and offers a new avenue for their applications.
%
	
	
%
%

	\section{Notation and Background}\label{sec:theory}
	A tensor of order $N$ is denoted by boldface underlined uppercase letters, $\tensor{X} \in \mathbb{R}^{I_1 \times \dots \times I_N}$, a matrix by boldface uppercase letters, $\mathbf{X} \in \mathbb{R}^{J\times K}$, a vector by boldface lowercase letters, $\mathbf{x} \in \mathbb{R}^{N \times 1}$, and a scalar by italic lowercase letters,  $x \in \mathbb{R}$. Subscripts are generally described by indices $n,i,j,k$. An element in a $N$-th order tensor is denoted by $x_{i_1, i_2, \dots, i_N} = \tensor{X}(i_1, i_2, \dots, i_N)$. Given an $N$-th order tensor $\tensor{X} \in \mathbb{R}^{I_1 \times \dots \times I_n \times \dots \times I_N}$ and an $M$-th order tensor $\tensor{Y} \in \mathbb{R}^{J_1 \times \dots \times J_m \times \dots \times J_M}$, with $I_n = J_m$, their $(m,n)$-contraction product is $\tensor{Z} = \tensor{X} \times^m_n \tensor{Y}$, where $\tensor{Z}$ is an $(N+M-2)$-th order tensor (for more detail we refer to \cite{danilo_part1}).
	By convention $\times_n$ is equivalent to $\times^2_n$, and is referred to as mode-$n$ contraction. The mode-$n$ unfolding of a tensor $\tensor{X}$ rearranges its elements into a matrix, and is expressed as $\mathbf{X}_{(n)}$ (see \cite{dolgov_2014} for more details). The symbol $\otimes$ denotes the Kronecker product,  $\circ$ the outer product, and $||\cdot||$ the Frobenius norm. A TN representation of a tensor $\tensor{X}$ is denoted by a calligraphic bold letter, $\tn{X}$. Finally, the operator $\text{vec}(\cdot)$ indicates vectorization of a tensor.

	\subsection{Tucker Decomposition}\label{sec:tkd}

	The TKD is analogous to a higher order form of matrix factorization, and decomposes an original tensor $\tensor{X}$ into a core tensor contracted by a factor matrix along each corresponding mode  \cite{tucker1,Tuck1963a,tucker64extension}. In the case of a $3$-rd order tensor $\tensor{X}\in\mathbb{R}^{I\times J \times K}$, the TKD is expressed as
	\begin{equation}\label{eq:tkd}
	\begin{aligned}
	\tensor{X}& = \tensor{G} \times_1 \mathbf{A} \times_2 \mathbf{B} \times_3 \mathbf{C}\\
	& = \sum_{q}^{Q} \sum_{r}^{R} \sum_{p}^{P} g_{qrp} \mathbf{a}_{q} \circ \mathbf{b}_{r} \circ \mathbf{c}_{p}
	\end{aligned}
	\end{equation}
	where $\tensor{G} \in \mathbb{R}^{Q \times R \times P} $ and $\mathbf{A} \in \mathbb{R}^{I \times Q},\mathbf{B} \in \mathbb{R}^{J \times R}$, $\mathbf{C} \in \mathbb{R}^{K \times P}$, while a TKD for an $N$-th order tensor is given by


	\begin{equation}\label{eq:tkdgen}
	\begin{aligned}
	\tensor{X} =& \tensor{G} \times_1 \mathbf{A}^{(1)} \times_2 \mathbf{A}^{(2)} \times_3 \dots \times_N \mathbf{A}^{(N)}
	\end{aligned}
	\end{equation}
	For convenience, any tensor expressed in the form of (\ref{eq:tkdgen}), will be referred to as ``in the TKD format", even though the factors $\mathbf{A}^{(n)}$ were not necessarily obtained via a TKD. The mode-$n$ unfolding of a tensor in the TKD format is
	\begin{equation}\label{eq:nthunfold}
	\mathbf{X}_{(n)}= \mathbf{A}^{(n)}\mathbf{G}_{(n)}(\mathbf{A}^{(N)}  \otimes \dots \otimes \mathbf{A}^{(n-1)} \otimes \mathbf{A}^{(n+1)} \otimes \dots \otimes \mathbf{A}^{(1)}   )^T
	\end{equation}

	\subsection{Background on Tensor Networks}

	A decomposition of a tensor into multi-way linked core tensors and matrices leads to  equations often involving numerous contraction products, which can be cumbersome to write and hard to visualize. For this reason it is common to represent tensors diagrammatically \cite{danilo_part1}, as in  Fig. \ref{fig:graph}.  An $N$-th order tensor can be represented as a node (circle) with as many edges (modes) as the tensor order. In TNs, contractions are designated by linking two common modes, called contraction modes, while ``dangling" edges are physical modes of the represented tensor.

	\begin{figure}[H]
		\centering
		\includegraphics[width=0.6\linewidth]{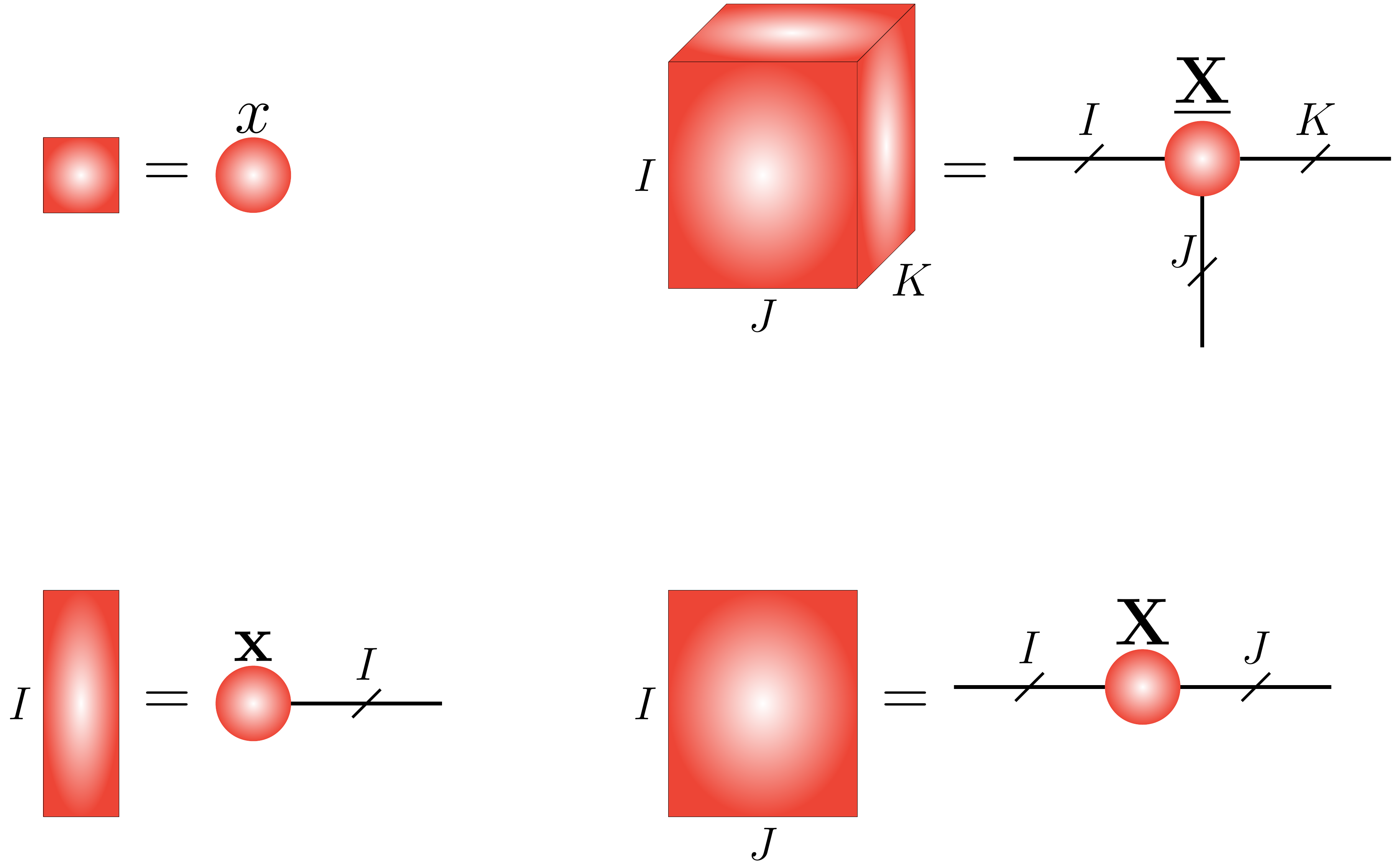}
		\caption{Building blocks of TNs. Anticlockwise from the top-left: scalar, vector, matrix,  and $3$-rd order tensor. Edges are called modes, and the associated label $I,J,K$, indicate their dimensionality.}
		\label{fig:graph}
	\end{figure}



	Two examples of TNs are provided in Fig. \ref{fig:TN}, whereby any mode which is not a physical mode is a contraction mode. Nodes linked only to contraction modes are contraction nodes, whereas nodes linked to one or more physical mode are referred to as physical nodes. The ``shape" of a TN is its topology, where the concept of topology is the same to the one adopted in Graph Theory \cite{graphs}.

	Each node in a TN $\tn{X}$ can either represent a particular feature of the original tensor $\tensor{X}$ (in case of physical nodes), or a model on how features are combined (in case of contraction nodes). 
	It is clear that, given that TNs represent tensors of any order,  without a framework for combining TNs, the problem of mixing features of individual cores, which subsequently allows for the extraction of the common features across the individual tensors, becomes difficult to address. This characteristic of ``feature locality" inherent to the nodes of a TN is of fundamental importance. Therefore our main motivation for this work is to provide the missing link for TN summation via a combination of their corresponding cores, thus simultaneously performing a mixture of features.

	\begin{figure}[H]
		\centering
		\includegraphics[width=0.5\linewidth]{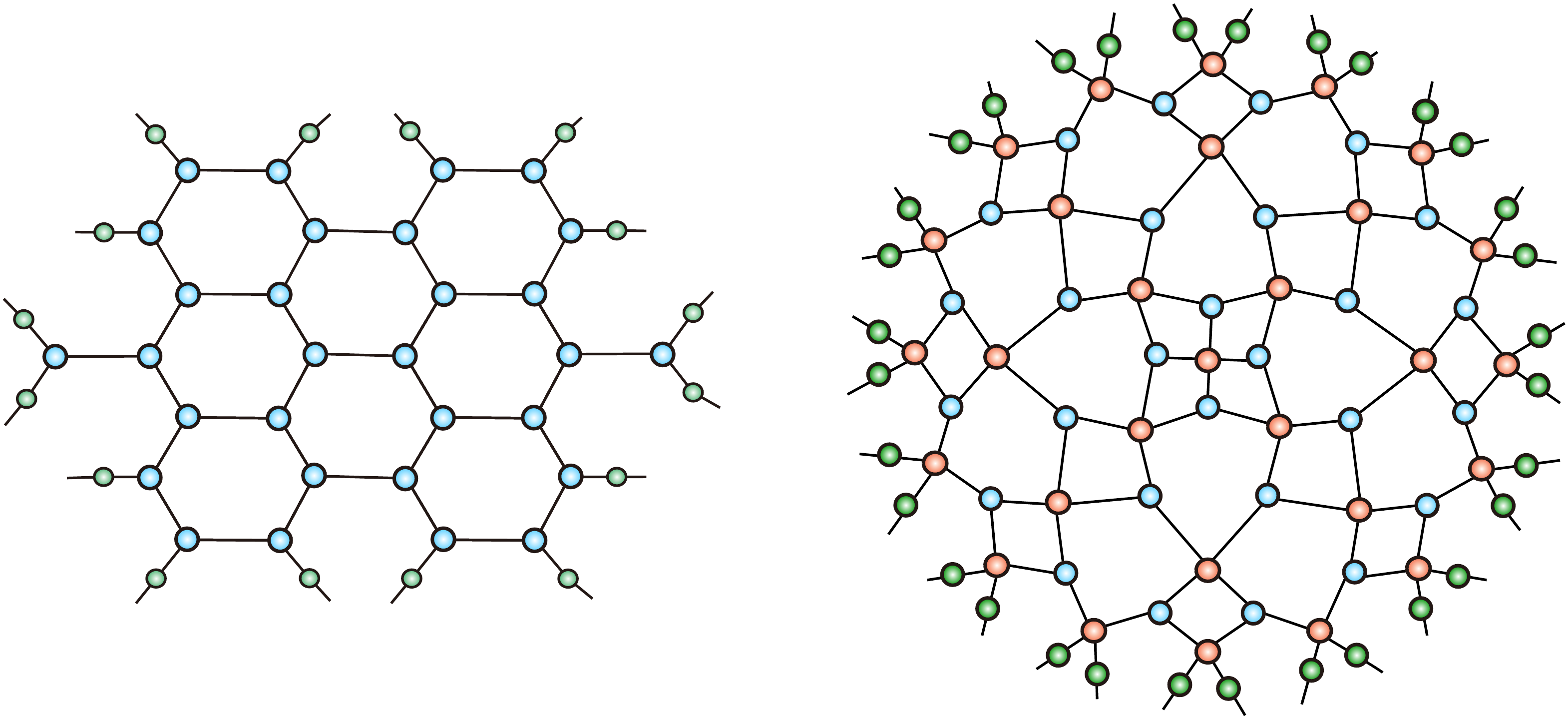}
		\caption{Examples of TNs (figure adapted from \cite{danilo_part1}). Left: a TN representing a $16$-th order tensor, with contraction nodes shown in blue, and physical nodes shown in green. Right: a TN representing a $32$-th order tensor, with contraction nodes shown in blue ($3$-rd order) and orange ($4$-th order), and physical nodes in green. }
		\label{fig:TN}
	\end{figure}

	\section{Sum of Tensor Networks}\label{sec:pepe}

	 To establish a framework of summation of two or more TNs, assume that the individual TNs: (i) have physical modes with equal dimensions, (ii)  have the same topology. We proceed by providing the definitions, the main conjecture, and specific propositions arising from the conjecture, followed by an outline of practical applications.

	\begin{definition}
		A \textbf{block tensor} is a tensor that is arranged into sub-tensors called \textbf{blocks}, that is, its entries are tensors of the same order but not necessarily of the same dimensionality.
	\end{definition}

	\begin{definition}
		The \textbf{superdiagonal} of a block tensor $\tensor{X}\in\mathbb{R}^{I_1 \times \dots \times I_N}$ is the collection of entries $x_{i_1,i_2,\dots, i_n}$, where $i_1 = i_2 = \dots = i_n$.
	\end{definition}

	\begin{conj}\label{conj:sumtns}
		Consider two tensors $\tensor{X}, \tensor{Y}\in \mathbb{R}^{I_1 \times \dots \times I_N}$ represented as TNs with equivalent topologies, but not necessarily with equivalent contracting modes, referred to as $\tn{X}$ and $\tn{Y}$. The sum $\tensor{Z} = \tensor{X}+\tensor{Y}$ can then be represented as a new TN, $\tn{Z}$, with equivalent topology to $\tn{X}$ and $\tn{Y}$. Its contraction nodes are in the form of a block tensor which is obtained by stacking the corresponding contraction nodes of $\tn{X}$ and $\tn{Y}$ along its superdiagonal. The physical nodes of $\tn{Z}$ are obtained by stacking the corresponding physical nodes of $\tn{X}$ and $\tn{Y}$ in such a way that the dimensionality of all contracting modes is increased but that of the physical modes is kept fixed.
	\end{conj}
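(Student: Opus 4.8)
\noindent\emph{Proof strategy.} The plan is to establish the equivalent scalar statement: for every physical multi-index $(i_1,\dots,i_N)$, contracting the proposed network $\tn{Z}$ returns $\tensor{X}(i_1,\dots,i_N)+\tensor{Y}(i_1,\dots,i_N)$. I would first dispatch the TKD format as a warm-up, since it already displays the whole mechanism through the unfolding identity~(\ref{eq:nthunfold}): with $\tensor{G}_Z$ the block tensor carrying $\tensor{G}_X$ and $\tensor{G}_Y$ on its superdiagonal and $\mathbf{A}_Z^{(n)}=\bigl[\,\mathbf{A}_X^{(n)}\ \ \mathbf{A}_Y^{(n)}\,\bigr]$ the concatenation along the contracting mode (so that $I_n$ is untouched), the block-diagonality of $\mathbf{G}_{Z,(n)}$ annihilates every mixed Kronecker block of $\bigl(\mathbf{A}_Z^{(N)}\otimes\dots\otimes\mathbf{A}_Z^{(1)}\bigr)^{T}$, leaving exactly $\mathbf{A}_X^{(n)}\mathbf{G}_{X,(n)}(\cdots)^{T}+\mathbf{A}_Y^{(n)}\mathbf{G}_{Y,(n)}(\cdots)^{T}=\mathbf{X}_{(n)}+\mathbf{Y}_{(n)}$. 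The same bookkeeping handles the base case of a single contraction product $\tensor{A}\times^{m}_{n}\tensor{B}$: stacking $\tensor{A}_X,\tensor{A}_Y$ along mode $n$ and $\tensor{B}_X,\tensor{B}_Y$ along mode $m$ makes the contraction sum split into a part running over the first block of the enlarged index (reproducing $\tensor{X}$) and a part over the second block (reproducing $\tensor{Y}$).

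For a general topology I would proceed as follows. First fix the topology isomorphism identifying the edges of $\tn{X}$ with those of $\tn{Y}$; this is what absorbs the ``not necessarily equivalent contracting modes'' clause, since a matched pair of contraction edges may carry different dimensions $d_e^{X}\neq d_e^{Y}$, and it is precisely this mismatch that forces a block construction in place of a naive core-by-core addition. Next observe that, by the prescription in the statement, every node $v$ of $\tn{Z}$ --- contraction or physical --- has its block core supported on only two ``corners'': the corner where \emph{all} contraction modes incident to $v$ take values in the lower (``$X$'') range, where it equals the core of $\tn{X}$, and the corner where they all lie in the upper (``$Y$'') range, where it equals the core of $\tn{Y}$; physical modes are left full and play no role in this. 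Now expand $\tensor{Z}(i_1,\dots,i_N)$ as the sum over all assignments $\{j_e\}_e$ of the contraction indices of the product $\prod_v \bigl(\text{core}^{Z}_v\bigr)$ of the relevant core entries. Each edge $e$ is shared by exactly two nodes, so a nonzero summand requires both endpoints to be evaluated at a corner consistent with the range in which $j_e$ falls; propagating this constraint through the (assumed connected) network forces a single global choice --- either every node sits in its $X$-corner, or every node sits in its $Y$-corner.

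Summing the two surviving families of terms yields $\tensor{X}(i_1,\dots,i_N)$ from the all-$X$ configuration (which by construction is just the contraction of $\tn{X}$ with the $X$-range indices playing the role of the original contraction indices) and $\tensor{Y}(i_1,\dots,i_N)$ from the all-$Y$ configuration, hence $\tensor{Z}=\tensor{X}+\tensor{Y}$. The topology is preserved because no edge is created or deleted, each contraction edge $e$ now has dimension $d_e^{X}+d_e^{Y}$, and every physical mode retains its dimension $I_n$ --- exactly the dimensional claims in the statement.

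I expect the main obstacle to be making the step ``corner support $\Rightarrow$ one consistent global $X/Y$ choice'' fully rigorous for an arbitrary graph: for a contraction node with several incident edges one must pin down precisely which block of the block tensor is nonzero, and then verify that the compatibility requirement on every shared edge genuinely rules out all mixed contributions, with no off-by-one slip in the index ranges. A related caveat is that the argument needs the TN to be connected --- for a disconnected TN the represented tensor is an outer product of its components and the block construction no longer returns their sum --- so the statement should really carry a connectedness hypothesis (or be refined component-wise). These gaps are presumably why the result is offered as a conjecture; the route above should nevertheless close once a connectedness assumption is in place.
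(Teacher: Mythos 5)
Your proposal is sound, and it goes further than the paper does: the paper never proves this statement in general --- it deliberately labels it a \emph{conjecture} and only establishes two special cases by direct block-matrix computation, namely chains of matrices (Proposition \ref{prop:chain}) and the TKD format (Proposition \ref{prop:sumtkd}, proved via the mode-$1$ unfolding of $2\times2\times2$ cores and explicit cancellation of the mixed Kronecker blocks). Your TKD warm-up is essentially the paper's Proposition \ref{prop:sumtkd} argument restated at the level of (\ref{eq:nthunfold}), so there you coincide with the authors. Where you genuinely diverge is the general case: your scalar expansion of $\tensor{Z}(i_1,\dots,i_N)$ over all contraction-index assignments, combined with the observation that every block core is supported only on its all-$X$ and all-$Y$ corners so that edge-by-edge consistency on a connected graph forces a single global $X$-or-$Y$ configuration, is the standard and correct way to prove TN addition for arbitrary topologies (it is how MPS/TT addition is justified in the literature), and it subsumes both of the paper's propositions as corollaries. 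What your route buys is generality and a conceptual explanation of \emph{why} the block construction works; what the paper's route buys is concreteness --- explicit core formulas for the two formats actually used in the experiments. Your connectedness caveat is a genuine refinement the paper misses: for a disconnected TN the represented tensor is an outer product over components, and the superdiagonal stacking yields $\prod_k(\tensor{X}_k+\tensor{Y}_k)$ rather than $\tensor{X}+\tensor{Y}$, so the conjecture as stated does need that hypothesis (it is satisfied in both of the paper's cases). The one step you should not leave implicit when writing this up is the precise definition of the block core at a node with several incident contraction edges --- in particular that all off-corner blocks are set to zero, which the conjecture's wording (``stacking'') does not by itself guarantee for physical nodes with two or more contraction modes.
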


	\begin{prop}\label{prop:chain}
		Conjecture \ref{conj:sumtns} holds for chains of matrices.
	\end{prop}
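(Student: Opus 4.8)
The plan is to verify the conjecture directly, by a block matrix computation organised as an induction on the length of the chain. First I would fix notation: a chain of matrices representing $\mathbf{X}\in\mathbb{R}^{I_1\times I_2}$ is a product $\mathbf{X}=\mathbf{A}^{(1)}\mathbf{A}^{(2)}\cdots\mathbf{A}^{(n)}$ with $\mathbf{A}^{(1)}\in\mathbb{R}^{I_1\times R_1}$, $\mathbf{A}^{(k)}\in\mathbb{R}^{R_{k-1}\times R_k}$ for $2\le k\le n-1$, and $\mathbf{A}^{(n)}\in\mathbb{R}^{R_{n-1}\times I_2}$; the edges labelled $R_1,\dots,R_{n-1}$ are the contraction modes, the nodes $\mathbf{A}^{(1)},\mathbf{A}^{(n)}$ are physical nodes, and $\mathbf{A}^{(2)},\dots,\mathbf{A}^{(n-1)}$ are contraction nodes. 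Likewise write $\mathbf{Y}=\mathbf{B}^{(1)}\cdots\mathbf{B}^{(n)}$ with internal dimensions $S_1,\dots,S_{n-1}$, allowing $R_k\ne S_k$ as the statement permits.

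Second, I would write down the candidate cores prescribed by Conjecture~\ref{conj:sumtns}. For a matrix, stacking blocks along the superdiagonal means block-diagonal placement, so for $2\le k\le n-1$ set $\mathbf{C}^{(k)}=\begin{pmatrix}\mathbf{A}^{(k)}&\mathbf{0}\\\mathbf{0}&\mathbf{B}^{(k)}\end{pmatrix}\in\mathbb{R}^{(R_{k-1}+S_{k-1})\times(R_k+S_k)}$, which enlarges both adjacent contraction modes. Since the physical mode must be preserved, set instead $\mathbf{C}^{(1)}=\begin{pmatrix}\mathbf{A}^{(1)}&\mathbf{B}^{(1)}\end{pmatrix}\in\mathbb{R}^{I_1\times(R_1+S_1)}$ and $\mathbf{C}^{(n)}=\begin{pmatrix}\mathbf{A}^{(n)}\\\mathbf{B}^{(n)}\end{pmatrix}\in\mathbb{R}^{(R_{n-1}+S_{n-1})\times I_2}$, i.e. plain concatenation that never touches the physical index.

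Third, I would show by induction on $k$ that $\mathbf{C}^{(1)}\mathbf{C}^{(2)}\cdots\mathbf{C}^{(k)}=\begin{pmatrix}\mathbf{A}^{(1)}\cdots\mathbf{A}^{(k)}&\mathbf{B}^{(1)}\cdots\mathbf{B}^{(k)}\end{pmatrix}$ for $1\le k\le n-1$. The base case $k=1$ is the definition of $\mathbf{C}^{(1)}$, and the inductive step is the elementary identity $\begin{pmatrix}\mathbf{P}&\mathbf{Q}\end{pmatrix}\begin{pmatrix}\mathbf{A}^{(k)}&\mathbf{0}\\\mathbf{0}&\mathbf{B}^{(k)}\end{pmatrix}=\begin{pmatrix}\mathbf{P}\mathbf{A}^{(k)}&\mathbf{Q}\mathbf{B}^{(k)}\end{pmatrix}$, whose block sizes are conformable precisely because the $\mathbf{C}^{(k)}$ were built block-diagonally from matching internal dimensions. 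Multiplying the $k=n-1$ case on the right by $\mathbf{C}^{(n)}$ and expanding the $1\times2$ by $2\times1$ block product gives $\mathbf{C}^{(1)}\cdots\mathbf{C}^{(n)}=\mathbf{A}^{(1)}\cdots\mathbf{A}^{(n)}+\mathbf{B}^{(1)}\cdots\mathbf{B}^{(n)}=\mathbf{X}+\mathbf{Y}=\mathbf{Z}$, and by construction this TN has the same (path-graph) topology with only the contraction-mode dimensions increased. The degenerate case $n=2$ has no contraction node and reduces to $\mathbf{Z}=\begin{pmatrix}\mathbf{A}^{(1)}&\mathbf{B}^{(1)}\end{pmatrix}\begin{pmatrix}\mathbf{A}^{(2)}\\\mathbf{B}^{(2)}\end{pmatrix}$; the extension from two summands to $M$ follows by replacing the two-block partitions with $M$-block ones, or by associativity of the sum.

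The calculation is routine; the one point I would treat carefully — and the only place an error could creep in — is the bookkeeping that separates physical from contraction modes, so that ``stacking'' means block-diagonal placement on the \emph{internal} indices and mere concatenation (never enlarging) on the \emph{physical} index, and that this is exactly what keeps all the block products conformable even when $R_k\ne S_k$. I would close by noting that this already displays the mechanism behind Conjecture~\ref{conj:sumtns} in general: the chain is the TN supported on a path graph, and the same ``block-diagonal on contraction edges, fixed on physical edges'' rule is the template one expects to reuse for tree- and loop-shaped topologies.
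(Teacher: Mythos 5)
Your proof is correct and takes essentially the same route as the paper's: concatenate the two endpoint (physical) cores, place the interior cores block-diagonally, and observe that the product telescopes to $\mathbf{X}+\mathbf{Y}$. The only differences are cosmetic — you organise the computation as an induction rather than a direct inspection of the full block product, and you allow unequal internal ranks $R_k\ne S_k$, a harmless refinement that is in fact slightly more faithful to the wording of Conjecture~\ref{conj:sumtns}.
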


	\begin{proof}
		Consider Fig. \ref{fig:chain} as a graphical representation of Conjecture 1, and suppose $\mathbf{X}= \mathbf{A}_1 \mathbf{A}_2 \cdots \mathbf{A}_N$, $\mathbf{Y}= \mathbf{B}_1 \mathbf{B}_2 \cdots \mathbf{B}_N$, where $\mathbf{X}, \mathbf{Y} \in \mathbb{R}^{I \times J}$, and $\mathbf{A}_n, \mathbf{B}_n \in \mathbb{R}^{R_n \times R_{n+1}} $, with $R_0 = I, R_{N} = J$. Define a new chain of matrices $\mathbf{Z}=\mathbf{C}_1 \mathbf{C}_2 \cdots \mathbf{C}_N$, where each $\mathbf{C}_n$ is an arrangement of $\mathbf{A}_n, \mathbf{B}_n$ according to Conjecture \ref{conj:sumtns}. By a direct inspection of $\mathbf{Z}$, we obtain
		\begin{equation}
		\begin{aligned}
		\mathbf{Z}& =  \mathbf{C}_1 \mathbf{C}_2 \cdots \mathbf{C}_N \\
		&=\begin{bmatrix}
		\mathbf{A}_1 & \mathbf{B}_1
		\end{bmatrix}
		\begin{bmatrix}
		\mathbf{A}_2 & \mathbf{0} \\
		\mathbf{0}   & \mathbf{B}_2 \\
		\end{bmatrix} \cdots
		\begin{bmatrix}
		\mathbf{A}_{N-1} & \mathbf{0} \\
		\mathbf{0}   & \mathbf{B}_{N-1} \\
		\end{bmatrix}
		\begin{bmatrix}
		\mathbf{A}_N \\
		\mathbf{B}_N \\
		\end{bmatrix} \\
		&= \mathbf{A}_1 \mathbf{A}_2 \cdots \mathbf{A}_n +  \mathbf{B}_1 \mathbf{B}_2 \cdots \mathbf{B}_n \\
		&= \mathbf{X} + \mathbf{Y}
		\end{aligned}
		\end{equation}

	\end{proof}

	\begin{figure}[H]
		\centering
		\includegraphics[width=0.6\linewidth]{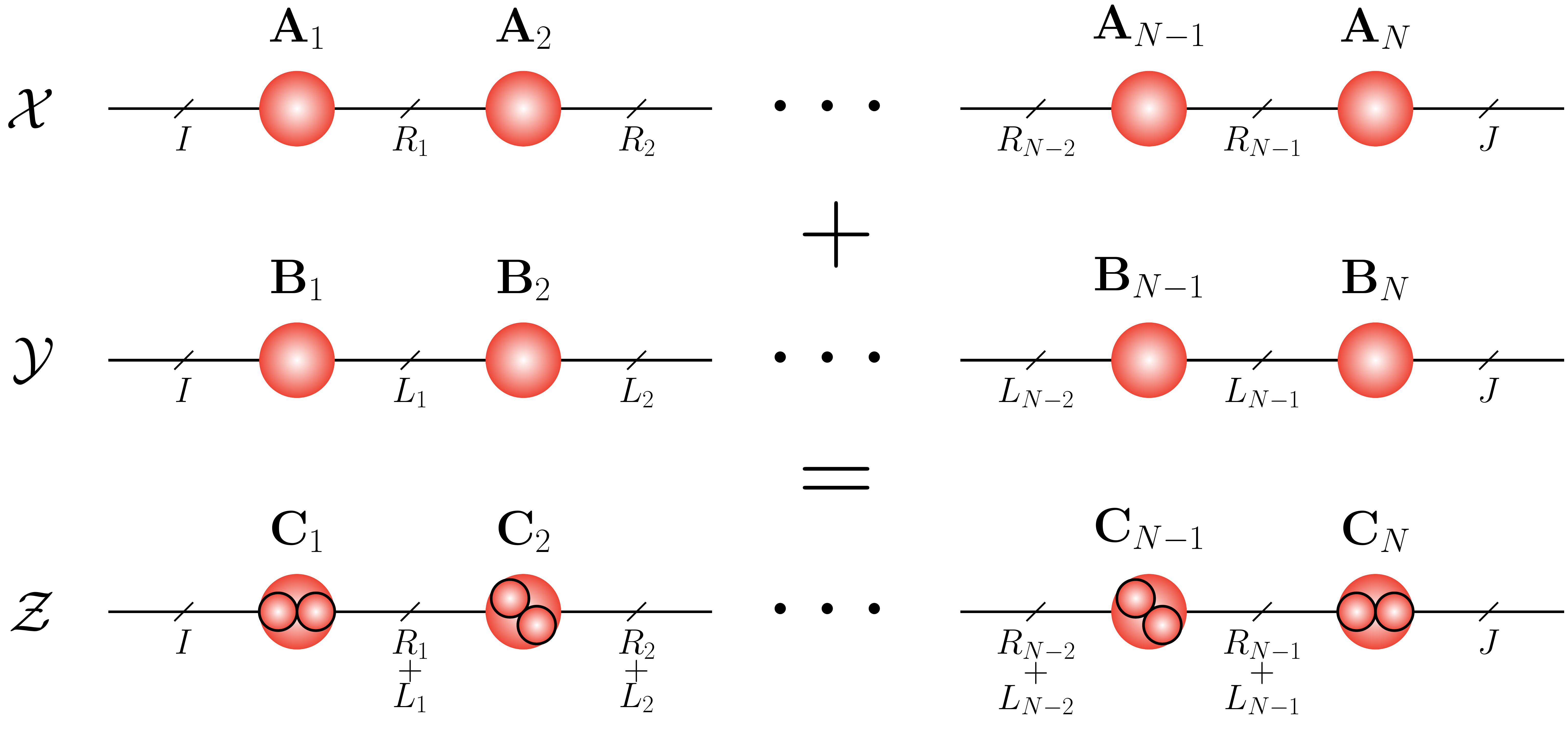}
		\caption{Graphical representation of a sum of matrices $\mathbf{X}$ and $\mathbf{Y}$,  expressed as a matrix chain. The matrices (nodes) $\mathbf{C}_n$ are composed of the matrices $\mathbf{A}_n$ and $\mathbf{B}_n$, which are arranged according to Conjecture \ref{conj:sumtns}, either through a concatenation (two horizontal subnodes) or a block-diagonal arrangement (two diagonal subnodes).}
		\label{fig:chain}
	\end{figure}

	\begin{prop}\label{prop:sumtkd}
		Conjecture \ref{conj:sumtns} holds for any tensor expressed in the TKD format.
	\end{prop}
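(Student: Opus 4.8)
The plan is to verify the construction of Conjecture \ref{conj:sumtns} directly via the mode-$n$ unfolding identity (\ref{eq:nthunfold}), which already writes a tensor in the TKD format as a chain of three matrices; in this way Proposition \ref{prop:sumtkd} becomes essentially the length-three instance of Proposition \ref{prop:chain}, modulo index bookkeeping. First I would fix notation: write $\tensor{X} = \tensor{G}_X \times_1 \mathbf{A}_X^{(1)} \times_2 \cdots \times_N \mathbf{A}_X^{(N)}$ and $\tensor{Y} = \tensor{G}_Y \times_1 \mathbf{A}_Y^{(1)} \times_2 \cdots \times_N \mathbf{A}_Y^{(N)}$, with cores $\tensor{G}_X \in \mathbb{R}^{Q_1^X \times \cdots \times Q_N^X}$, $\tensor{G}_Y \in \mathbb{R}^{Q_1^Y \times \cdots \times Q_N^Y}$ (note the ranks need not agree) and factors $\mathbf{A}_X^{(n)} \in \mathbb{R}^{I_n \times Q_n^X}$, $\mathbf{A}_Y^{(n)} \in \mathbb{R}^{I_n \times Q_n^Y}$. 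Following the Conjecture, I define $\tensor{Z} = \tensor{G}_Z \times_1 \mathbf{A}_Z^{(1)} \times_2 \cdots \times_N \mathbf{A}_Z^{(N)}$ where $\mathbf{A}_Z^{(n)} = [\mathbf{A}_X^{(n)}, \mathbf{A}_Y^{(n)}] \in \mathbb{R}^{I_n \times (Q_n^X + Q_n^Y)}$ (physical mode $I_n$ kept fixed, contracting mode enlarged), and $\tensor{G}_Z \in \mathbb{R}^{(Q_1^X+Q_1^Y) \times \cdots \times (Q_N^X+Q_N^Y)}$ carries $\tensor{G}_X$ in the ``all-$X$'' corner block, $\tensor{G}_Y$ in the ``all-$Y$'' corner block, and zeros in every mixed block, i.e.\ $\tensor{G}_X$ and $\tensor{G}_Y$ are stacked along the superdiagonal. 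Since the TKD topology is the fixed star graph, $\tn{Z}$ has by construction the same topology as $\tn{X}$ and $\tn{Y}$, so only $\tensor{Z} = \tensor{X} + \tensor{Y}$ remains to be shown.

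Second, I would apply (\ref{eq:nthunfold}) to $\tensor{Z}$ for an arbitrary mode $n$. Two structural facts carry the argument. (i) The mode-$n$ unfolding $\mathbf{G}_{Z,(n)}$ is, after a permutation $\Pi$ of its columns that groups the column multi-indices into all-$X$, all-$Y$, and mixed, block-diagonal with diagonal blocks $\mathbf{G}_{X,(n)}$ and $\mathbf{G}_{Y,(n)}$ followed by a zero block. (ii) Because each $\mathbf{A}_Z^{(k)} = [\mathbf{A}_X^{(k)}, \mathbf{A}_Y^{(k)}]$ is a horizontal concatenation, a column of $\bigotimes_{k \ne n} \mathbf{A}_Z^{(k)}$ (in the order of (\ref{eq:nthunfold})) indexed by an all-$X$ multi-index equals the corresponding column of $\bigotimes_{k \ne n} \mathbf{A}_X^{(k)}$, and similarly for $Y$; hence the same $\Pi$ brings this Kronecker product into the block form $[\,\bigotimes_{k\ne n}\mathbf{A}_X^{(k)},\ \bigotimes_{k\ne n}\mathbf{A}_Y^{(k)},\ (\text{mixed})\,]$. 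Inserting $\Pi \Pi^T = \mathbf{I}$ between $\mathbf{G}_{Z,(n)}$ and the transposed Kronecker factor in (\ref{eq:nthunfold}), the three matrices align exactly as in the chain of Proposition \ref{prop:chain}: $\mathbf{A}_Z^{(n)}$ is the concatenated first node, $\mathbf{G}_{Z,(n)}\Pi$ the block-diagonal middle node, and $\big((\bigotimes_{k\ne n}\mathbf{A}_Z^{(k)})\Pi\big)^T$ the stacked last node. Multiplying out, the mixed and zero blocks contribute nothing, and the product collapses to $\mathbf{A}_X^{(n)}\mathbf{G}_{X,(n)}(\bigotimes_{k\ne n}\mathbf{A}_X^{(k)})^T + \mathbf{A}_Y^{(n)}\mathbf{G}_{Y,(n)}(\bigotimes_{k\ne n}\mathbf{A}_Y^{(k)})^T = \mathbf{X}_{(n)} + \mathbf{Y}_{(n)}$. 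Since the mode-$n$ unfolding determines the tensor, $\tensor{Z} = \tensor{X} + \tensor{Y}$.

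I expect the only genuine obstacle to be the Kronecker-product bookkeeping: making the permutation $\Pi$ precise and checking it simultaneously block-diagonalizes $\mathbf{G}_{Z,(n)}$ and block-splits $\bigotimes_{k\ne n}\mathbf{A}_Z^{(k)}$, together with confirming that all ``mixed'' cross-terms vanish. This is conceptually routine but notationally heavy. An alternative that avoids choosing a mode altogether is to work with $\text{vec}(\cdot)$: from $\text{vec}(\tensor{X}) = \big(\mathbf{A}_X^{(N)} \otimes \cdots \otimes \mathbf{A}_X^{(1)}\big)\text{vec}(\tensor{G}_X)$ and the analogous expressions for $\tensor{Y},\tensor{Z}$, one observes that $\text{vec}(\tensor{G}_Z)$ is a $0$--$1$ selection of the entries of $\text{vec}(\tensor{G}_X)$ and $\text{vec}(\tensor{G}_Y)$ whose supported columns in $\mathbf{A}_Z^{(N)} \otimes \cdots \otimes \mathbf{A}_Z^{(1)}$ are exactly the columns of $\mathbf{A}_X^{(N)} \otimes \cdots \otimes \mathbf{A}_X^{(1)}$ and of $\mathbf{A}_Y^{(N)} \otimes \cdots \otimes \mathbf{A}_Y^{(1)}$ respectively; summing then gives $\text{vec}(\tensor{X}) + \text{vec}(\tensor{Y})$. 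Either route establishes the proposition.
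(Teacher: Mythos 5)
Your proof is correct and rests on the same two pillars as the paper's: the mode-$n$ unfolding identity (\ref{eq:nthunfold}), which reduces the claim to a three-matrix chain in the spirit of Proposition \ref{prop:chain}, and the observation that the superdiagonal block structure of the core annihilates every ``mixed'' column of the Kronecker factor, so that only the all-$X$ and all-$Y$ blocks survive and the product collapses to $\mathbf{X}_{(n)}+\mathbf{Y}_{(n)}$. The difference is one of generality and completeness of execution. The paper fixes $N=3$, unfolds only along mode $1$, and verifies the Kronecker bookkeeping on a concrete $2\times 2\times 2$ instance with numerical entries for $\mathbf{C}_x,\mathbf{C}_y$, asserting that the general case follows ``without loss of generality''; you instead treat arbitrary order $N$, arbitrary (possibly unequal) multilinear ranks, and an arbitrary mode $n$ by introducing the column permutation $\Pi$ that simultaneously block-diagonalizes $\mathbf{G}_{Z,(n)}$ and block-splits $\bigotimes_{k\ne n}\mathbf{A}_Z^{(k)}$ --- which is precisely the step the paper leaves implicit, and the one place where the argument could conceivably go wrong if the column orderings of the core unfolding and of the Kronecker product were inconsistent (they are not, since (\ref{eq:nthunfold}) presupposes a common ordering). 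Your closing remark that a single mode-$n$ unfolding determines the tensor also cleanly replaces the paper's weaker statement that ``the same procedure can be applied to any mode''. The vectorization alternative you sketch is likewise valid and arguably the shortest route, since it avoids choosing a mode altogether; what the paper's concrete worked example buys in exchange is mainly readability within a short conference format.
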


	\begin{proof}
		For simplicity we here prove Proposition \ref{prop:sumtkd} for $3$-rd order tensors, but without loss of generality the result holds for any tensor order. Fig. \ref{fig:sumtkd} shows the tensors $\tensor{X}, \tensor{Y} \in \mathbb{R}^{I_1 \times I_2 \times I_3}$ in the TKD format, that is
		\begin{equation}
		\begin{aligned}
		\tensor{X} &= \tensor{G}_x \times_1 \mathbf{A}_x \times_2 \mathbf{B}_x \times_3 \mathbf{C}_x \\
		\tensor{Y} &= \tensor{G}_y \times_1 \mathbf{A}_y \times_2 \mathbf{B}_y \times_3 \mathbf{C}_y
		\end{aligned}
		\end{equation}
		with respective TN representations $\tn{X}, \tn{Y}$. A new TN, $\tn{Z}$, is obtained by combining $\tn{X}$ and $\tn{Y}$ according to Conjecture \ref{conj:sumtns} (observe Fig. \ref{fig:sumtkd}), and the so represented tensor, $\tensor{Z}$, can be described by
		\begin{equation}\label{eq:z}
		\tensor{Z} = \tensor{G}_z \times_1 \mathbf{A}_z \times_2 \mathbf{B}_z \times_3 \mathbf{C}_z
		\end{equation}
		The task is to show that $\tensor{Z}=\tensor{X}+\tensor{Y}$. We shall consider the mode-$1$ unfolding, but the same procedure can be applied to any mode. Define
	$ 
		\mathbf{A}_z =
		\begin{bmatrix}
		\mathbf{A}_x & \mathbf{A}_y
		\end{bmatrix},$
		and $\tensor{G}_z$ as an arrangement of $\tensor{G}_x$ and $\tensor{G}_y$ along the superdiagonal of $\tensor{G}_z$, and consider
	
		\begin{equation}
		\begin{aligned}
		\mathbf{K}_1 &= (\mathbf{C}_x \otimes \mathbf{B}_x) ^T \\
		\mathbf{K}_2 &= (\mathbf{C}_y \otimes \mathbf{B}_y) ^T \\
		\end{aligned}
		\end{equation}
		 Upon performing the mode-$1$ unfolding of $\tensor{X}$ and $\tensor{Y}$ according to (\ref{eq:nthunfold}), and combining the matrices $\mathbf{X}_{(1)}, \mathbf{Y}_{(1)}$ in their TN form, then from Proposition \ref{prop:chain} the resulting matrix, $\mathbf{Z}_*$, can be described as
		\begin{equation}\label{eq:tnz}
		\mathbf{Z}_* =
		\begin{bmatrix}
		\mathbf{A}_x & \mathbf{A}_y
		\end{bmatrix}
		\begin{bmatrix}
		\mathbf{G}_{x(1)} & \mathbf{0} \\
		\mathbf{0} & \mathbf{G}_{y(1)}
		\end{bmatrix}
		\begin{bmatrix}
		\mathbf{K}_1 & \mathbf{K}_2
		\end{bmatrix}^T
		\end{equation}

		Therefore, in order to prove Proposition \ref{prop:sumtkd} it is sufficient to show that $\mathbf{Z}_{(1)}=\mathbf{Z}_*$, where $\mathbf{Z}_{(1)}$ is the mode-$1$ unfolding of $\tensor{Z}$ represented as in (\ref{eq:z}). To this end, consider
		\begin{equation} \label{eq:tkdz}
		\begin{aligned}
		&\mathbf{Z}_{(1)} = \mathbf{A}_z \mathbf{G}_{z(1)} (\mathbf{C}_z \otimes \mathbf{B}_z)^T\\
		&=
		\begin{bmatrix}
		\mathbf{A}_x & \mathbf{A}_y
		\end{bmatrix}
		\mathbf{G}_{z(1)} \big(
		\begin{bmatrix}
		\mathbf{C}_x & \mathbf{C}_y
		\end{bmatrix} \otimes
		\begin{bmatrix}
		\mathbf{B}_x & \mathbf{B}_y
		\end{bmatrix}
		\big)^T
		\end{aligned}
		\end{equation}

		For convenience, denote $\tensor{X}, \tensor{Y}, \tensor{G}_x, \tensor{G}_y \in \mathbb{R}^{2 \times 2 \times 2}$, and define $\tensor{G}_\alpha(:,:,j) = \mathbf{\hat{G}}_\alpha (j)$, where $\alpha\in\{x,y\}$. Hence
		\begin{equation}\label{eq:gz}
		\mathbf{G}_{z(1)}=
		\begingroup 
		\setlength\arraycolsep{2pt}
		\begin{bmatrix}
		\mathbf{\hat{G}}_x (1) & \mathbf{0} & \mathbf{\hat{G}}_x (2) & \mathbf{0}& \mathbf{0}& \mathbf{0}& \mathbf{0}& \mathbf{0}\\
		\mathbf{0}& \mathbf{0}& \mathbf{0}& \mathbf{0}& \mathbf{0} & \mathbf{\hat{G}}_y (1) & \mathbf{0} &\mathbf{\hat{G}}_y (2)
		\end{bmatrix}
		\endgroup
		\end{equation}

		\noindent Without loss of generality assume
		\begin{equation}
		\begin{aligned}
		\mathbf{C}_x &=
		\begin{bmatrix}
		1 & 2 \\
		5 & 6
		\end{bmatrix} \hspace{5mm}
		\mathbf{C}_y =
		\begin{bmatrix}
		3 & 4 \\
		7 & 8
		\end{bmatrix}\\
		\end{aligned}
		\end{equation}
		to give
		\begin{equation}
		\begin{aligned}
		&\begin{bmatrix}
		\mathbf{C}_x & \mathbf{C}_y
		\end{bmatrix} \otimes
		\begin{bmatrix}
		\mathbf{B}_x & \mathbf{B}_y
		\end{bmatrix} = \\
		& =
		\begingroup 
		\setlength\arraycolsep{2pt}
		\renewcommand*{\arraystretch}{1.5}
		\begin{bmatrix}
		1 \begin{bmatrix}
		\mathbf{B}_x & \mathbf{B}_y
		\end{bmatrix} & 2\begin{bmatrix}
		\mathbf{B}_x & \mathbf{B}_y
		\end{bmatrix} & 3\begin{bmatrix}
		\mathbf{B}_x & \mathbf{B}_y
		\end{bmatrix} & 4\begin{bmatrix}
		\mathbf{B}_x & \mathbf{B}_y
		\end{bmatrix} \\
		5\begin{bmatrix}
		\mathbf{B}_x & \mathbf{B}_y
		\end{bmatrix} & 6\begin{bmatrix}
		\mathbf{B}_x & \mathbf{B}_y
		\end{bmatrix} & 7\begin{bmatrix}
		\mathbf{B}_x & \mathbf{B}_y
		\end{bmatrix} & 8\begin{bmatrix}
		\mathbf{B}_x & \mathbf{B}_y
		\end{bmatrix}
		\end{bmatrix} 
		\endgroup\\
		 &= \mathbf{U}
		\end{aligned}
		\end{equation}

		Upon substituting $\mathbf{G}_{z(1)}$ and $\mathbf{U}$ into (\ref{eq:tkdz}) and making use of the sparse nature of (\ref{eq:gz}), we arrive at
		\begin{align}
		&\mathbf{Z}_{(1)} = \mathbf{A}_z \mathbf{G}_{z(1)} \mathbf{U}^T \nonumber \\
				&=
				\begin{bmatrix}
				\mathbf{A}_x &\mathbf{A}_y
				\end{bmatrix}
				\begingroup 
				\setlength\arraycolsep{2pt}
				\begin{bmatrix}
				\mathbf{\hat{G}}_x(1) &\mathbf{\hat{G}}_x(2)& \mathbf{0} &\mathbf{0} \nonumber  \\
				\mathbf{0} &\mathbf{0} & \mathbf{\hat{G}}_y(1) &\mathbf{\hat{G}}_y(2)
				\end{bmatrix}
				\endgroup
				%
				%
				\begin{bmatrix}
				1 \mathbf{B}_x^T & 5\mathbf{B}_x^T \\
				2\mathbf{B}_x^T & 6\mathbf{B}_x^T \\
				3\mathbf{B}_y^T & 7\mathbf{B}_y^T\\
				4\mathbf{B}_y^T & 8\mathbf{B}_y^T
				\end{bmatrix}\\
		&=
		\begin{bmatrix}
		\mathbf{A}_x& \mathbf{A}_y
		\end{bmatrix}
		\begin{bmatrix}
		\mathbf{G}_{x(1)} & \mathbf{0} \\
		\mathbf{0} &\mathbf{G}_{y(1)}
		\end{bmatrix}
		\begin{bmatrix}
		\mathbf{K}_1^T\\
		\mathbf{K}_2^T
		\end{bmatrix} \nonumber \\
		&=
		\begin{bmatrix}
		\mathbf{A}_x &\mathbf{A}_y
		\end{bmatrix}
		\begin{bmatrix}
		\mathbf{G}_{x(1)} & \mathbf{0} \\
		\mathbf{0} &\mathbf{G}_{y(1)}
		\end{bmatrix}
		\begin{bmatrix}
		\mathbf{K}_1 & \mathbf{K}_2
		\end{bmatrix}^T = \mathbf{Z}_*
		\end{align}
	\end{proof}

	\begin{figure}[H]
		\centering
		\includegraphics[width=0.6\linewidth]{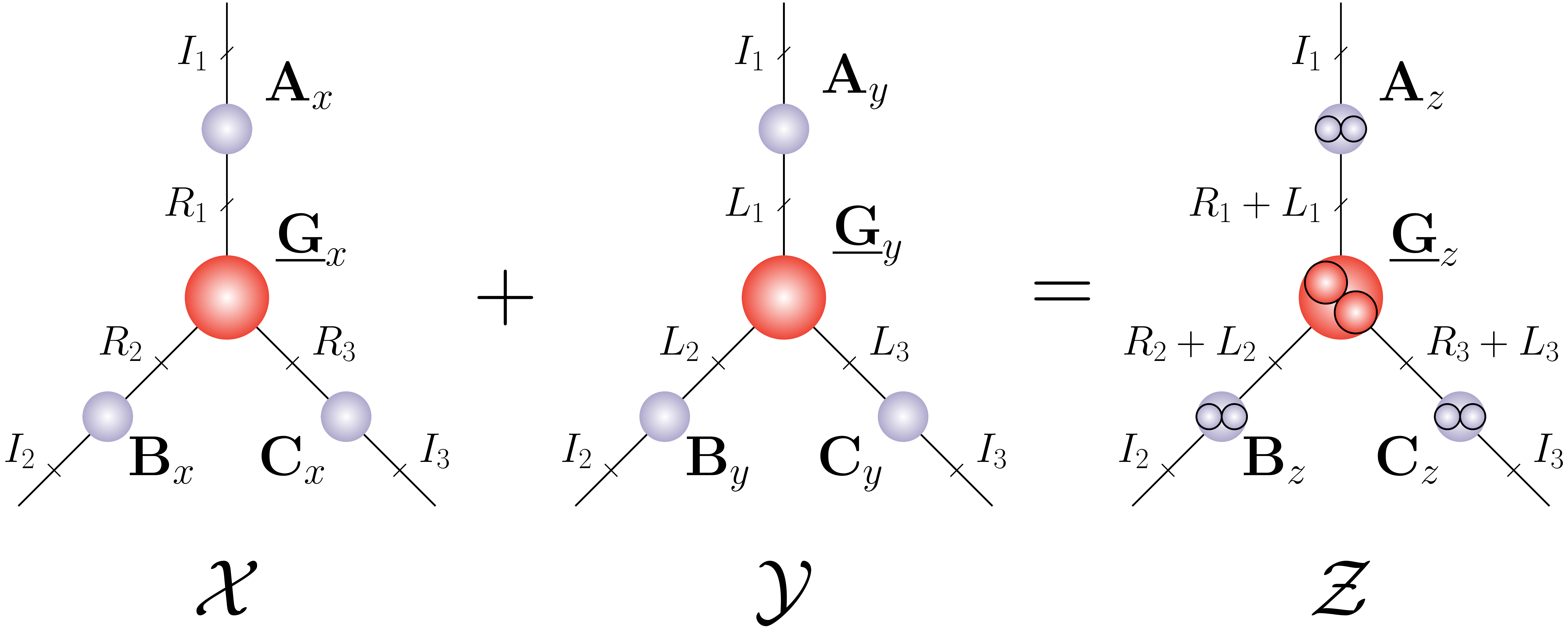}
		\caption{Graphical sum of tensors $\tensor{X}$ and $\tensor{Y}$ in the TKD format.}
		\label{fig:sumtkd}
	\end{figure}

	\section{Experimental Results}\label{sec:sim}

	Practical implications of the main contribution of this paper, that is, Proposition \ref{prop:sumtkd}, were investigated in an image classification task based on the benchmark ETH-80 dataset. It consists of 3280 images, composed of 8 classes, 10 objects per class, 41 images per object. For our simulations, the images were downsampled to $32 \times 32$ pixels. Given the RGB format of the considered images, our dataset consists of $3$-rd order tensors $\tensor{X}_m \in \mathbb{R}^{32 \times 32 \times 3}, m=1,\dots, M$ (where $M=3280$).


	For each image, $\tensor{X}_m$, in the training set a TKD was performed by setting the size of the core tensors to $R_1 = R_2 = R_3 = 3$. The factor matrices $\{\mathbf{A}_m, \mathbf{B}_m, \mathbf{C}_m\}$ were scaled by $\eta^\frac{1}{3}$, where $\eta = ||\tensor{G}_m||$, to regularize the contribution of features. This is equivalent to normalizing the core tensor of each TKD decomposition to unit norm, without affecting the accuracy of the approximation. The scaled factor matrices $\{\mathbf{A}_m, \mathbf{B}_m, \mathbf{C}_m\}$ were concatenated into the matrices  $\{\mathbf{A}_d, \mathbf{B}_d, \mathbf{C}_d\}$. The SVD was subsequently applied and the first $\{R_1, R_2, R_3\}$ singular vectors were retained (we refer to this operator as tSVD($\cdot$)), which yielded matrices $\{\mathbf{A}_c, \mathbf{B}_c, \mathbf{C}_c\}$. Finally, for each image, $\tensor{X}_m$, a new core tensor was computed as
	\begin{equation}\label{eq:proj}
	\accentset{\circ}{\tensor{G}}_m = \tensor{X}_m \times_1 \mathbf{A}_c^T \times_2 \mathbf{B}_c^T \times_3 \mathbf{C}_c^T
	\end{equation}
	Equation (\ref{eq:proj}) represents the projection of the data images onto the features common to the full dataset, implying that $\accentset{\circ}{\tensor{G}}_m$ can be used for classification purposes. Their vectorized versions $\text{vec}(\accentset{\circ}{\tensor{G}}_m), m=1,\dots, M$,  were fed to a machine learning classifier in the form of an SVM (Gaussian kernel), which employed a one-vs-one (OVO) approach. During the testing stage, for each new element $\tensor{X}_*$, $\text{vec}(\accentset{\circ}{\tensor{G}}_*)$ was computed analogously and was classified according to the trained model, as summarized in Algorithm \ref{algo:pepe}.

	\begin{algorithm}
		\caption{Sum of TNs for image classification}
		\label{algo:pepe}
		\begin{algorithmic}[1]
			\State\textbf{Input:} Dataset $\{\tensor{X}_m\}_{m=1}^M$,  $\{R_1, R_2, R_3\}$
			\For {each element $m$ in dataset}
			\State $\tensor{X}_m = \tensor{G}_m \times_1 \times \mathbf{A}_m \times_2 \mathbf{B}_m \times_3 \mathbf{C}_m$
			\State $\eta = ||\tensor{G}_m||$
			\State $\mathbf{A}_d = 	\begin{bmatrix}
			\mathbf{A}_d &\eta^{\frac{1}{3}}  \mathbf{A}_m
			\end{bmatrix}$
			\State $\mathbf{B}_d = 	\begin{bmatrix}
			\mathbf{B}_d &\eta^{\frac{1}{3}} \mathbf{B}_m
			\end{bmatrix}$
			\State $\mathbf{C}_d = 	\begin{bmatrix}
			\mathbf{B}_d &\eta^{\frac{1}{3}} \mathbf{C}_m
			\end{bmatrix}$
			\EndFor
			\State $\mathbf{A}_c = \text{tSVD}(\mathbf{A}_d, R_1)$
			\State $\mathbf{B}_c = \text{tSVD}(\mathbf{B}_d, R_2)$
			\State $\mathbf{C}_c = \text{tSVD}(\mathbf{C}_d, R_3)$
			\For  {each element $m$ in dataset}
			\State$\accentset{\circ}{\tensor{G}}_m = \tensor{X}_m \times_1 \mathbf{A}_c^T \times_2 \mathbf{B}_c^T \times_3 \mathbf{C}_c^T$
			\EndFor
			\State Train classifier on $\{\text{vec}(\accentset{\circ}{\tensor{G}}_m)\}_{m=1}^M$.
		\end{algorithmic}
	\end{algorithm}

	The procedure outlined in Algorithm \ref{algo:pepe} was applied to the ETH-80 dataset, with $80\%$ of the available images serving as randomly selected training data. The average of 20 realizations yielded an accuracy of $92.3\%$. Observe in Fig. \ref{fig:confmat} that all classes were classified with a hit-rate $>90\%$ except for ``Cow", ``Dog", and ``Horse", which in the dataset indeed look similar (we refer to \cite{eth}).
	\begin{figure}[H]
		\centering
		\includegraphics[width=0.6\linewidth]{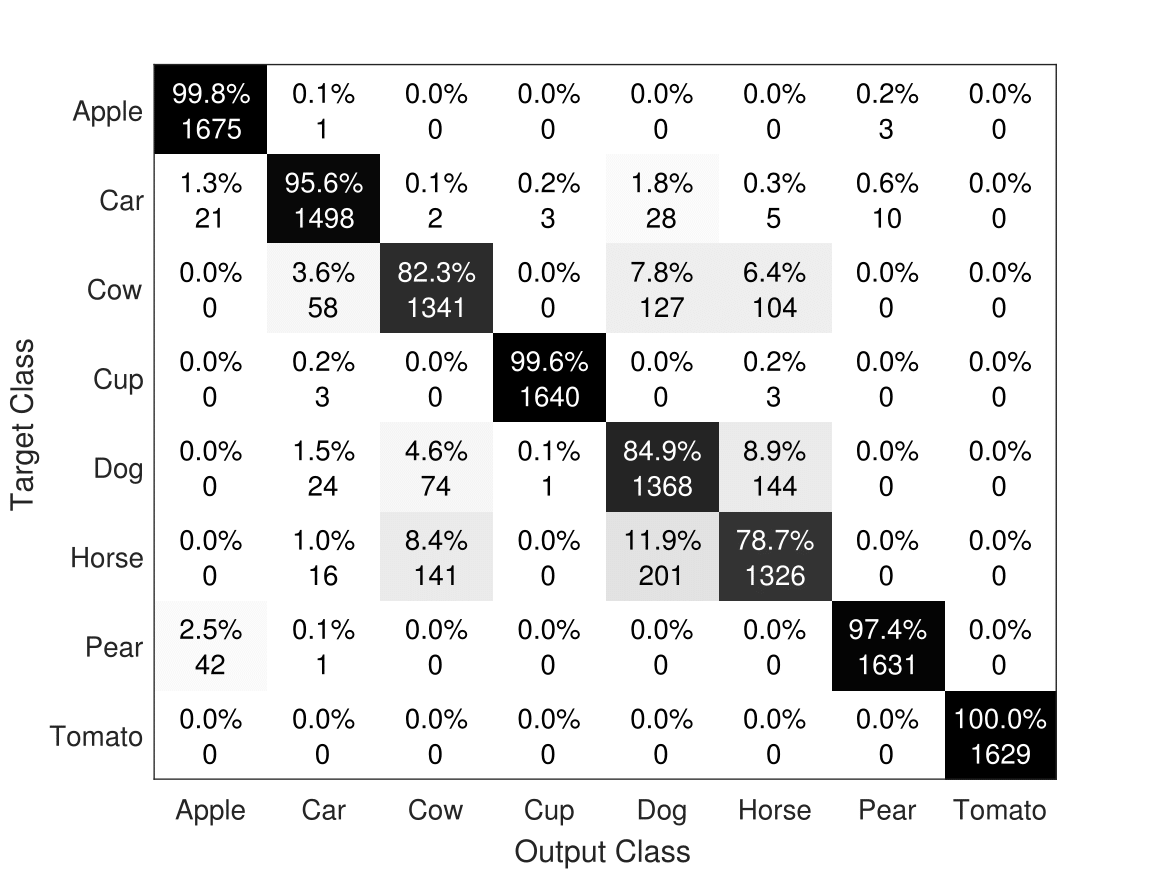}
		\caption{Confusion matrix of the classification algorithm, where $80\%$ of the data was used for training.}
		\label{fig:confmat}
	\end{figure}

	Performance comparisons were conducted against an SVM (Gaussian kernel) applied directly to the vectorised images, and a tensor-based classification method which we refer to as ``TKD-CONCAT", which concatenates all members of ETH-80 in a $4$-th order tensor, performs a TKD, and retains the first 3 factor matrices, as outlined in \cite{feature}. The results are shown in Fig. \ref{fig:diff_rates} and suggest that a direct summation of TNs yields a physically meaningful mixture of features, offering enhanced cross accuracy. Importantly, our proposed algorithm outperformed the other methods, especially when a small amount of data is available for training, as shown in the bottom graph. 
	\begin{figure}[H]
		\centering
		\includegraphics[width=0.6\linewidth]{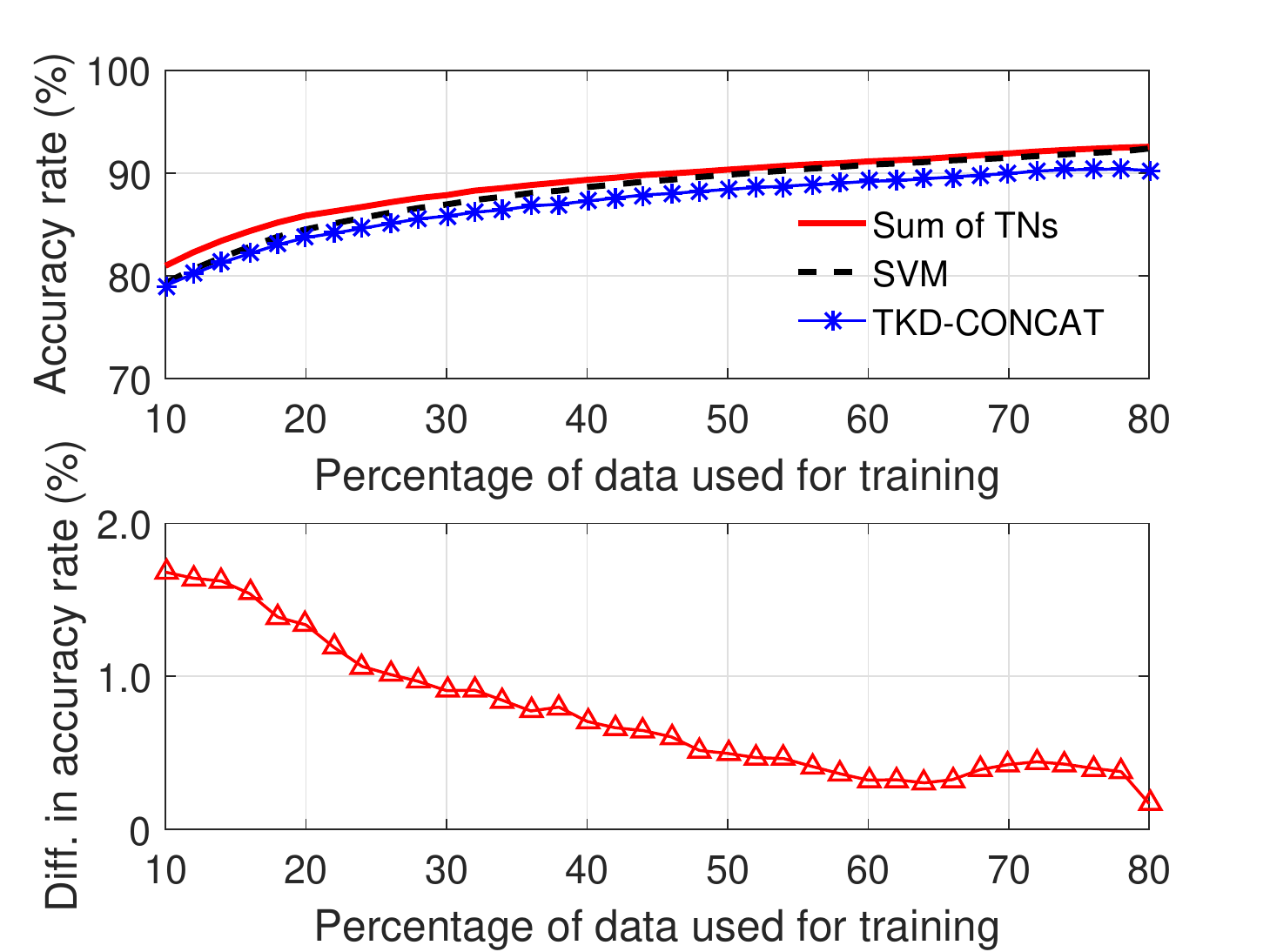}
		\caption{Classification results using the sum of TNs approach. Top: Accuracy rates. Bottom: Difference in accuracy between classification based on sum of TNs and standard SVM.}
		\label{fig:diff_rates}
	\end{figure}

	\section{Conclusion}\label{sec:conc}

	We have introduced a formalism behind the sum of tensor networks, and have validated the approach for chains of matrices ($2$-nd order tensors) and, more generally, for tensors in the Tucker format. By employing the analogy between the sum of tensor network cores and a feature fusion, we have devised a new  algorithm for image classification, which rests solely upon the the sum of tensor networks. Moreover, the proposed algorithm has been shown to exhibit a noticeable advantage when only few training data are available. Tests on the the ETH-80 dataset have attained an accuracy rate of $92.3\%$, and the proposed algorithm has been shown to outperform both standard Support Vector Machine and a related tensor-based classification approach. Generalisations of the proposed framework are the subject of ongoing work.

	%
	%
	%





	%
	\bibliographystyle{IEEEtran}
	\bibliography{references}

\begin{thebibliography}{10}
\providecommand{\url}[1]{#1}
\csname url@samestyle\endcsname
\providecommand{\newblock}{\relax}
\providecommand{\bibinfo}[2]{#2}
\providecommand{\BIBentrySTDinterwordspacing}{\spaceskip=0pt\relax}
\providecommand{\BIBentryALTinterwordstretchfactor}{4}
\providecommand{\BIBentryALTinterwordspacing}{\spaceskip=\fontdimen2\font plus
\BIBentryALTinterwordstretchfactor\fontdimen3\font minus
  \fontdimen4\font\relax}
\providecommand{\BIBforeignlanguage}[2]{{%
\expandafter\ifx\csname l@#1\endcsname\relax
\typeout{** WARNING: IEEEtran.bst: No hyphenation pattern has been}%
\typeout{** loaded for the language `#1'. Using the pattern for}%
\typeout{** the default language instead.}%
\else
\language=\csname l@#1\endcsname
\fi
#2}}
\providecommand{\BIBdecl}{\relax}
\BIBdecl

\bibitem{comon}
J.~G. McWhirter and I.~Proudler, \emph{{Mathematics in signal processing
  V}}.\hskip 1em plus 0.5em minus 0.4em\relax Oxford University Press, 2002.

\bibitem{lathauwer_hooi}
L.~de~Lathauwer, B.~D. Moor, and J.~Vandewalle, ``{On the best rank-1 and
  rank-$(R_1,R_2, . . . ,R_N)$ approximation of higher-order tensors},''
  \emph{SIAM Journal on Matrix Analysis and Applications}, vol.~21, no.~4, pp.
  1324--1342, 2000.

\bibitem{lathauwer_hosvd}
------, ``{A multilinear singular value decomposition},'' \emph{SIAM Journal on
  Matrix Analysis and Applications}, vol.~21, no.~4, pp. 1253--1278, 2000.

\bibitem{cichocki_brain}
A.~Cichocki, ``{Tensors decompositions: new concepts for brain data
  analysis?}'' \emph{Journal of Control, Measurement, and System Integration
  (SICE)}, vol.~47, no.~7, pp. 507--517, 2011.

\bibitem{tenssignalprocessing}
A.~Cichocki, D.~P. Mandic, A.~H. Phan, C.~F. Caiafa, G.~Zhou, Q.~Zhao, and
  L.~D. Lathauwer, ``{Tensor decompositions for signal processing
  applications},'' \emph{IEEE Signal Processing Magazine}, vol.~32, no.~2, pp.
  145--163, 2015.

\bibitem{ttoseledets}
V.~Oseledets, ``{Tensor-train decomposition},'' \emph{Society for Industrial
  and Applied Mathematics}, vol.~33, no.~5, pp. 2295--2317, 2011.

\bibitem{parafac}
R.~Bro, ``{PARAFAC. Tutorial and applications},'' \emph{Chemometrics and
  Intelligent Laboratory Systems}, vol.~38, no.~2, pp. 149--171, 1997.

\bibitem{parafac2}
L.~de~Lathauwer, ``{A link between the canonical decomposition in multilinear
  algebra and simultaneous matrix diagonalization},'' \emph{SIAM Journal on
  Matrix Analysis and Applications}, vol.~28, no.~7, pp. 642--666, 2006.

\bibitem{tucker1}
L.~R. Tucker, ``{Some mathematical notes on three-mode factor analysis},''
  \emph{Psychometrika}, vol.~31, no.~3, pp. 279--311, 1966.

\bibitem{tucker2}
------, ``{Tucker dimensionality reduction of three-dimensional arrays in
  linear time},'' \emph{SIAM Journal on Matrix Analysis and Applications},
  vol.~30, no.~3, pp. 939--956, 2008.

\bibitem{cobe}
G.~Zhou, A.~Cichocki, Y.~Zhang, and D.~Mandic, ``{Group component analysis for
  multiblock data: common and individual feature extraction},'' \emph{IEEE
  Transactions on Neural Networks and Learning Systems}, vol.~27, no.~11, pp.
  2426--2439, 2016.

\bibitem{eth}
B.~Leibe and B.~Schiele, ``Analyzing appearance and contour based methods for
  object categorization,'' in \emph{Proc. IEEE Conference on Computer Vision
  and Pattern Recognition (CVPR’03)}, June 2003, pp. 409--415.

\bibitem{danilo_part1}
A.~Cichocki, I.~Oseledets, Q.~Zhao, N.~Lee, A.~H. Phan, and D.~Mandic,
  ``{Tensor networks for dimensionality reduction and large-scale pptimization.
  Part: 1 low-rank tensor decompositions},'' vol.~9, no. 4--5, pp. 249--429,
  2016.

\bibitem{dolgov_2014}
S.~Dolgov and D.~Savostyanov, ``{Alternating minimal energy methods for linear
  systems in higher dimensions},'' \emph{SIAM Journal on Scientific Computing},
  vol.~36, no.~5, pp. A2248--A2271, 2014.

\bibitem{Tuck1963a}
L.~R. Tucker, ``Implications of factor analysis of three-way matrices for
  measurement of change,'' in \emph{{P}roblems in Measuring Change}, C.~W.
  Harris, Ed.\hskip 1em plus 0.5em minus 0.4em\relax Madison WI: University of
  Wisconsin Press, 1963, pp. 122--137.

\bibitem{tucker64extension}
------, ``{T}he extension of factor analysis to three-dimensional matrices,''
  in \emph{{C}ontributions to Mathematical Psychology.}, H.~Gulliksen and
  N.~Frederiksen, Eds.\hskip 1em plus 0.5em minus 0.4em\relax New York: Holt,
  Rinehart and Winston, 1964, pp. 110--127.

\bibitem{graphs}
R.~Trudeau, \emph{{Introduction to graph theory}}.\hskip 1em plus 0.5em minus
  0.4em\relax Kent State University Press, 1993.

\bibitem{feature}
A.~H. Phan and A.~Cichocki, ``{Tensor decompositions for feature extraction and
  classification of high dimensional datasets},'' \emph{IEICE Nonlinear Theory
  and Its Applications}, vol.~1, no.~1, pp. 37--68, 2010.

\end{thebibliography}

	%





\end{document}